\begin{document}

\title{GPU-Accelerated Belief Propagation for Program Analysis}

\author{Haoyu Feng}
\email{haoyufeng@stu.pku.edu.cn}
\affiliation{%
  \institution{Peking University}
  \city{Beijing}
  \country{China}
}

\author{Xin Zhang}
\email{xin@pku.edu.cn}
\affiliation{%
  \institution{Peking University}
  \city{Beijing}
  \country{China}
}

\renewcommand{\shortauthors}{Feng et al.}

\begin{abstract}
    Belief Propagation (BP) is a widely used approximate inference algorithm in probabilistic graphical models (PGMs), but is computationally expensive when applied to large-scale program analysis. Existing GPU-based approaches are unable to support flexible update strategies and have yet to integrate logical constraints with GPU acceleration, leading to challenges in both generality and efficiency.
  
  We present FastLBP, a GPU-accelerated BP framework for program analysis. 
  We propose a unified representation for specifying flexible update strategies required in program analysis, along with a dependency analysis algorithm to enable parallel execution. 
  Furthermore, we implement BP with local structures on GPUs by assigning individual threads to message computations and utilizing a memory-efficient representation.
  Experiments on SmartFL and BINGO show that FastLBP achieves average speedups of $17.42\times$ and $2.82\times$ over CPU-based approaches on SmartFL and BINGO, respectively, and $6.14\times$ over GPU-based approach on SmartFL, while preserving accuracy.
  Moreover, FastLBP supports update strategies that existing GPU-based approaches cannot support, demonstrating its improved generality for real-world program analysis.
\end{abstract}

\begin{CCSXML}
<ccs2012>
 <concept>
  <concept_id>00000000.0000000.0000000</concept_id>
  <concept_desc>Do Not Use This Code, Generate the Correct Terms for Your Paper</concept_desc>
  <concept_significance>500</concept_significance>
 </concept>
 <concept>
  <concept_id>00000000.00000000.00000000</concept_id>
  <concept_desc>Do Not Use This Code, Generate the Correct Terms for Your Paper</concept_desc>
  <concept_significance>300</concept_significance>
 </concept>
 <concept>
  <concept_id>00000000.00000000.00000000</concept_id>
  <concept_desc>Do Not Use This Code, Generate the Correct Terms for Your Paper</concept_desc>
  <concept_significance>100</concept_significance>
 </concept>
 <concept>
  <concept_id>00000000.00000000.00000000</concept_id>
  <concept_desc>Do Not Use This Code, Generate the Correct Terms for Your Paper</concept_desc>
  <concept_significance>100</concept_significance>
 </concept>
</ccs2012>
\end{CCSXML}




\maketitle

\section{Introduction}
\label{sec:intro}
Belief Propagation (BP) is a widely used approximate inference algorithm for probabilistic graphical models (PGMs)~\cite{pearl1988probabilistic}. In these models, random variables are represented as nodes, and their dependencies are captured by the graph structure. The primary task of inference is to estimate the posterior marginal distributions of these variables given observed evidence. Exact inference is often computationally intractable, making approximate algorithms a more practical alternative. BP is a prominent approximate inference algorithm~\cite{murphy1999loopy} that operates by iteratively passing messages between neighboring nodes until convergence or for a specified number of iterations. Due to its simplicity and efficiency, BP has been successfully applied in diverse fields including computer vision~\cite{felzenszawlb2006efficient}, channel coding~\cite{mackay1996near}, protein folding~\cite{yanover2002approximate}, and, notably, program analysis~\cite{raghothaman2018user}.

In recent years, the combination of program analysis and probabilistic reasoning has enabled the application of BP to program analysis~\cite{raghothaman2018user,heo2019continuously,chen2021boosting,zeng2022fault,kim2022learning,zhang2024learning,li2025combining,shi2025on,wu2025smartfl,zhang2026fuzzing}. For instance, BINGO~\cite{raghothaman2018user} converts the Datalog program analysis into a directed acyclic graph, transforms the graph into a Bayesian network by attaching probabilities to logic rules, and applies BP to infer the likelihood of bug reports. SmartFL~\cite{zeng2022fault,wu2025smartfl} models program semantics as a factor graph by utilizing information from static analysis results and dynamic execution traces, and employs BP to infer the probabilities of program elements being faulty.

Program analysis on large-scale software often yields large PGMs, leading to high computational costs for BP. 
Graphics Processing Units (GPUs), optimized for massive parallel operations, are a natural choice for reducing the inference cost of BP. This potential has motivated prior GPU-based BP approaches~\cite{brunton2006belief,grauergray2008gpu,merwe2019message,reddy2012a,abburi2011a,zhou2024pgmax}. 

However, existing GPU-based BP approaches suffer from two key limitations when applied to program analysis. 
First, they lack support for flexible message update strategies required in program analysis. In the existing BP literature, researchers have devoted considerable attention to update strategies as a means of improving BP performance. The most straightforward scheme is synchronous BP, where all messages are updated simultaneously~\cite{murphy1999loopy}. In contrast, asynchronous schemes introduce a degree of sequential updating, such as tree-based updates~\cite{wainwright2006tree} and greedy message selection~\cite{elidan2006residual,gonzalez2009residual,merwe2019message, aksenov2020relaxed}. While synchronous BP exposes high levels of parallelism and can approximate the correct posteriors well when it converges, asynchronous BP has been shown, both empirically and theoretically, to converge more reliably when synchronous BP fails to converge~\cite{elidan2006residual}. Program analysis applications often require different update strategies depending on usage scenarios, yet existing GPU-based implementations support only limited update strategies. For example, widely used tools such as PGMax~\cite{zhou2024pgmax} support only synchronous BP, restricting their applicability in program analysis.

Second, BP with local structures~\cite{wu2025belief} has not yet been explored in GPU-based implementations, although it is theoretically efficient. In program analysis, probabilistic constraints often exhibit inherent local structures, where many variable assignments share the same probability value. A typical example is the probabilistic Horn constraints in the form of \(p:x_1\land\cdots\land x_n\rightarrow y\), which is ubiquitous in modeling program semantics. While traditional BP requires an exponential number of operations (\(O(n\cdot2^n)\)) to marginalize over \(n\) variables in such a constraint, BP with local structures can reduce this complexity to linear time (\(O(n)\)) by grouping assignments with identical probabilities and extracting common terms. However, existing GPU-based approaches typically rely on dense tabular representations that fail to capture these computational redundancies.
Consequently, due to these limitations in generality and algorithmic efficiency, CPU-based approaches~\cite{wu2025belief,mooij2010libdai} still remain the preferred choice for program analysis applications.

We present FastLBP, a GPU-accelerated BP framework designed for program analysis. 
First, to accommodate diverse update strategies required in program analysis, we propose a unified representation for user-specified update strategies. We further develop a dependency analysis algorithm that identifies groups of messages that can be updated safely in parallel. By partitioning edges into ordered groups, FastLBP preserves the semantics of user-specified update strategies while enabling parallel execution. 
Second, to address the inefficiency caused by traditional BP, we incorporate BP with local structures into our GPU design. We organize messages and factor configurations in a compact layout to enable efficient memory access. We further design a GPU execution scheme that maps each message computation to a thread, where each thread evaluates a sequence of sub-messages. This design reduces the computational complexity by efficiently implementing BP with local structures.

We conduct extensive experiments to evaluate the efficiency and accuracy of our algorithm. We implement FastLBP in a prototype system and compare it with state-of-the-art approaches~\cite{wu2025belief,zhou2024pgmax}. We evaluate FastLBP using two representative approaches: SmartFL, a probabilistic fault localization approach, and BINGO, a Bayesian program analysis approach, to demonstrate the effectiveness of our approach. Experimental results show that: (1) compared to the state-of-the-art CPU-based approach~\cite{wu2025belief}, our approach achieves average speedups of $17.42\times$ and $2.82\times$ on SmartFL and BINGO, respectively, while maintaining the accuracy. Compared to the state-of-the-art GPU-based approach~\cite{zhou2024pgmax}, our approach achieves an average speedup of $6.14\times$ on SmartFL. (2) FastLBP supports update strategies used in BINGO that are not supported by existing GPU-based approaches, highlighting its advantage in enabling flexible update strategies.

In summary, our contributions are as follows:
\begin{itemize}
  \item We present FastLBP, a GPU-accelerated BP framework for program analysis, enabling flexible update strategies and efficient inference on large-scale PGMs.
  \item We propose a unified representation for user-specified update strategies and a dependency analysis algorithm for effective parallelization.
  \item We design a GPU implementation of BP with local structures by assigning each GPU thread to compute a single message and organizing the memory in a compact way.
  \item We implement a high-performance prototype and conduct extensive experiments to demonstrate the flexibility and efficiency of our approach.
\end{itemize}

\section{Motivating Example}
\label{sec:motivating-example}
In this section, we illustrate the key ideas of our approach using an example generated by BINGO~\cite{raghothaman2018user}, which performs BP on PGMs derived from analysis rules.

\begin{figure}[t]
\centering
\begin{tikzpicture}[
  node distance=0.5cm and 1cm,
  var/.style={circle, draw=red, minimum size=0.5cm, thick}, 
  factor/.style={rectangle, draw=red, minimum size=0.5cm, thick, fill=red!20}, 
  edge/.style={-, thick, red}
]
  
  \node[factor] (A5) {$a_5$};
  \node[var, below=of A5] (V5) {$v_5$}; 
  \node[var, above left=of A5] (V3) {$v_3$};
  \node[var, above right=of A5] (V4) {$v_4$}; 
  \node[factor, above=of V3] (A3) {$a_3$};
  \node[factor, above=of V4] (A4) {$a_4$};
  \node[var, above=of A4] (V2) {$v_2$};
  \node[var, above=of A3] (V1) {$v_1$};
  \node[factor, above=of V1] (A1) {$a_1$};
  \node[factor, above=of V2] (A2) {$a_2$};

  \draw[edge] (V1) -- (A1);
  \draw[edge] (V2) -- (A2);
  \draw[edge] (V1) -- (A3);
  \draw[edge] (V1) -- (A4);
  \draw[edge] (V2) -- (A4);
  \draw[edge] (V3) -- (A3);
  \draw[edge] (V4) -- (A4);
  \draw[edge] (V3) -- (A5);
  \draw[edge] (V4) -- (A5);
  \draw[edge] (V5) -- (A5);

\end{tikzpicture}
\caption{Example of a factor graph.}
\Description{There are 5 factors and 5 variables in the factor graph.}
\label{fig:example-graph}
\end{figure}

\subsection{Support for Update Strategies}
\label{sec:update-strategy}
BP is a message-passing algorithm widely used for performing inference on factor graphs. 
\Cref{fig:example-graph} shows a subgraph of a factor graph generated by BINGO. A factor graph is a bipartite graph consisting of two types of nodes, variables and factors. For example, the variables are $V = \{v_1, v_2, v_3, v_4, v_5\}$ and the factors are $A = \{a_1, a_2, a_3, a_4, a_5\}$ in \Cref{fig:example-graph}. BP operates by iteratively passing messages along edges between variables and factors. A factor-to-variable message is computed depending on incoming variable-to-factor messages, while a variable-to-factor message is computed based on incoming factor-to-variable messages. For each message update, the incoming messages are those from neighboring nodes except the target node. For example, message from $a_4$ to $v_1$, denoted as $\mu_{a_4\to v_1}$, depends on messages from $v_2$ and $v_4$ to $a_4$, denoted as $\mu_{v_2\to a_4}$ and $\mu_{v_4\to a_4}$. Moreover, $\mu_{v_2\to a_4}$ depends on $\mu_{a_2\to v_2}$, and $\mu_{v_4\to a_4}$ depends on $\mu_{a_5\to v_4}$.

Update strategies play a critical role in the performance and convergence behavior of BP.
They define how message updates are scheduled. For instance, synchronous BP requires all messages to be updated in parallel, whereas asynchronous BP enforces a degree of sequentialism in message updates. Although synchronous BP is well suited for parallel execution, asynchronous BP has been shown to converge more reliably~\cite{elidan2006residual}, which is important for many program analysis applications. For example, BINGO ranks alarms according to the inferred probabilities. Asynchronous BP often produces more accurate probabilities than synchronous BP, thereby reducing the effort required for further inspection of bug reports. At the same time, synchronous BP remains useful because it runs faster and produces good results when it converges. 

However, existing approaches are unable to support flexible update strategies. One line of GPU-based work designs update strategies for specific domains, such as computer vision~\cite{grauergray2008gpu} and channel code decoding~\cite{romero2012sequential}. PGMax~\cite{zhou2024pgmax} is a Python package for running BP efficiently on both CPUs and GPUs and supports arbitrary graphs, but it only supports synchronous BP. Another line of work studies the convergence behavior of different update strategies in sequential and parallel scenarios~\cite{elidan2006residual,gonzalez2009residual,merwe2019message,aksenov2020relaxed}, but lacks a general mechanism to capture user-specified dependencies among message updates. Our work differs from with these works by providing a systematic way to capture updating dependencies, thereby enabling a wider range of update strategies for GPU parallel computation.

We illustrate a complex update strategy that is not supported by existing GPU-based approaches. To balance convergence stability and computational efficiency, BINGO adopts a round-robin asynchronous update strategy, where messages are updated sequentially in a predefined order during each iteration. When updating a message, the most recently updated neighboring messages are used whenever available. Although this sequential schedule improves convergence behavior, it introduces strict data dependencies between message updates.
These dependencies arise from both the user-specified order and the message update rules of BP. As a result, updating multiple messages simultaneously may violate the semantics of the asynchronous schedule.

Consider the factor graph in \Cref{fig:example-graph}. Suppose the factor-to-variable messages are updated in the following order during an iteration:
\[
\begin{aligned}
\mu_{a_1\to v_1},\;
\mu_{a_2\to v_2},\;
\mu_{a_3\to v_1},\;
\mu_{a_3\to v_3},\;
\mu_{a_4\to v_1},\\
\mu_{a_4\to v_2},\;
\mu_{a_4\to v_4},\;
\mu_{a_5\to v_3},\;
\mu_{a_5\to v_4},\;
\mu_{a_5\to v_5}.
\end{aligned}
\]
Let $\mu_{a\to v}^{i}$ denote the message from factor $a$ to variable $v$ at iteration $i$. 
In this example, $\mu_{a_4\to v_1}^{i + 1}$ depends on $\mu_{a_2\to v_2}^{i + 1}$ because the latter is updated earlier in the schedule, and on $\mu_{a_5\to v_4}^i$ because that message has not yet been updated. Such ordering constraints create dependencies that limit naive parallel execution. 
However, not all messages are mutually dependent. Two messages can be safely updated in parallel if neither requires the updated value of the other. In this example, $\mu_{a_2\to v_2}^{i+1}$ is independent of $\mu_{a_1\to v_1}^{i+1}$, and $\mu_{a_3\to v_1}^{i+1}$ is independent of both. Therefore, these messages can be updated concurrently without violating the asynchronous semantics. However, existing GPU-based approaches fail to capture such dependencies.

To enable parallel execution while preserving the semantics of different user-specified update strategies, our approach first provides a unified representation of update strategies, namely a preordered set $(E, \lesssim)$ over the edge set $E$ of the factor graph. Intuitively, specifying $e_1\lesssim e_2$ means that, in a single iteration, the update of the message on edge $e_1$ must happen no later than that on edge $e_2$. We then perform dependency analysis that captures both the update strategy constraints and BP computation dependencies. Based on this analysis, messages can be partitioned into ordered groups that can be updated in parallel. 

In this example, our approach represents the update strategy as a total order induced by the preorder according to the updating sequence:
\[
\begin{aligned}
&(a_1, v_1)\prec
(a_2, v_2)\prec
(a_3, v_1)\prec
(a_3, v_3)\prec
(a_4, v_1)\prec\\
&(a_4, v_2)\prec
(a_4, v_4)\prec
(a_5, v_3)\prec
(a_5, v_4)\prec
(a_5, v_5).
\end{aligned}
\]
Our approach then performs a topological traversal on the Hasse diagram of this order and partitions the messages into groups. 
Specifically, for $(a, v)\prec (a', v')$, our approach ensures that if $\mu_{a'\to v'}$ depends on $\mu_{a\to v}$ in the BP computation, then $\mu_{a\to v}$ must be computed strictly earlier than $\mu_{a'\to v'}$; otherwise, the two updates can be performed in parallel. 
First, we can update $\mu_{a_1\to v_1}$, $\mu_{a_2\to v_2}$, and $\mu_{a_3\to v_1}$ in parallel, as we discussed earlier in this section. Next, we place $\mu_{a_3\to v_3}$ into a new group because it depends on $\mu_{a_1\to v_1}$. We then add $\mu_{a_4\to v_1}$, $\mu_{a_4\to v_2}$, and $\mu_{a_4\to v_4}$ to the same group one by one, because these four messages are mutually independent. But we place $\mu_{a_5\to v_3}$ into another group because it depends on $\mu_{a_4\to v_4}$. Finally, $\mu_{a_5\to v_4}$ and $\mu_{a_5\to v_5}$ can be placed in the same group as $\mu_{a_5\to v_3}$.
As a result, the messages can be partitioned into three groups:
\[
\begin{gathered}
\{\mu_{a_1\to v_1},\; \mu_{a_2\to v_2},\; \mu_{a_3\to v_1}\}, \\
\{\mu_{a_3\to v_3},\; \mu_{a_4\to v_1},\; \mu_{a_4\to v_2},\; \mu_{a_4\to v_4}\}, \\
\{\mu_{a_5\to v_3},\; \mu_{a_5\to v_4},\; \mu_{a_5\to v_5}\}.
\end{gathered}
\]
Updating each group in parallel, while processing the groups sequentially, preserves the semantics of the original asynchronous schedule and exposes substantial parallelism.

\subsection{Implementing BP with Local Structures}
\label{sec:local-structure}
Many probabilistic constraints arising in program analysis exhibit local structures~\cite{wu2025belief}. 
Local structures occur when multiple assignments of variables share the same probability value in a factor. 
Instead of representing the factor as a full table over all assignments, these repeated patterns can be grouped and computed more efficiently.
A common example in program analysis is the probabilistic Horn constraint
$p : x_1 \land x_2 \land \cdots \land x_n \rightarrow y$. 
Such a constraint states that if all premises hold, the conclusion holds with probability $p$. 
When the constraint is encoded as a factor in a factor graph, many assignments of the variables share identical probabilities. 
Traditional BP evaluates factor-to-variable messages by enumerating all assignments of neighboring variables, resulting in an exponential complexity of $O(n\cdot2^n)$. 
BP with local structures exploits the repeated patterns in the factor to group assignments with identical probabilities, reducing the computation to linear time $O(n)$.
To leverage this advantage on GPUs, we extend the standard message-level parallelization of BP to support BP with local structures. We assign each GPU thread to compute a single factor-to-variable message. 
By providing each thread with the corresponding edge, factor, and variable identifiers, we can efficiently access the required messages and factor configurations.
\Cref{fig:computation-process} illustrates the computation process using the message group:
\[
\{\mu_{a_3\to v_3},\; \mu_{a_4\to v_1},\; \mu_{a_4\to v_2},\; \mu_{a_4\to v_4}\}\text{.}
\]
In BP with local structures, a factor-to-variable message is computed as the sum of several sub-messages. 
Each sub-message corresponds to a group of assignments that share the same probability value.
It is either zero or computed as a constant multiplied by the product of incoming variable-to-factor messages.
For instance, for the constraints generated by BINGO, messages are defined over binary variables (domain $\{0, 1\}$). Each entry is computed as the sum of five sub-messages, resulting in ten sub-messages in total.

\begin{figure}[t]
\centering
\begin{tikzpicture}[
  font=\small,
  cell/.style={draw, minimum width=0.65cm, minimum height=0.6cm},
  idle/.style={cell, fill=gray!30},
  exec/.style={cell, fill=yellow!50},
  label/.style={anchor=west}
]

\draw[
    decorate,
    line width=1.2pt,
    decoration={snake, amplitude=2pt, segment length=6pt}
]
(-0.8,-1.6) -- (-0.8,0.8);
\draw[->, thick] (-0.325,1.0) -- (6.175,1.0) node[midway, above] {time};

\node[label] at (6.2,0.5) {$\mu_{a_3 \rightarrow v_3}$};
\node[label] at (6.2,-0.1) {$\mu_{a_4 \rightarrow v_1}$};
\node[label] at (6.2,-0.7) {$\mu_{a_4 \rightarrow v_2}$};
\node[label] at (6.2,-1.3) {$\mu_{a_4 \rightarrow v_4}$};

\foreach \i in {0,...,9}{
    \node[exec] at (\i*0.65,0.5) {$\text{sm}_\i$};
}

\foreach \i in {0,...,9}{
    \node[exec] at (\i*0.65,-0.1) {$\text{sm}_\i$};
}

\foreach \i in {0,...,9}{
    \node[exec] at (\i*0.65,-0.7) {$\text{sm}_\i$};
}

\foreach \i in {0,...,9}{
    \node[exec] at (\i*0.65,-1.3) {$\text{sm}_\i$};
}

\end{tikzpicture}

\caption{The computation process of our approach. }
\Description{A slot indicates a sub-message computation}
\label{fig:computation-process}
\end{figure}

\begin{figure*}[t]
\centering
\begin{tikzpicture}[
  font=\small,
  cell/.style={draw, minimum width=1.2cm, minimum height=0.6cm},
  exec/.style={cell, fill=orange!30},
  highlight/.style={draw=red, thick, dashed},
  label/.style={anchor=west}
]

\node[exec] at (0*1.2,1.7) {$t_{(a_1, v_1)}$};
\node[exec] at (1*1.2,1.7) {$t_{(a_2, v_2)}$};
\node[exec] at (2*1.2,1.7) {$t_{(a_3, v_1)}$};
\node[exec] at (3*1.2,1.7) {$t_{(a_3, v_3)}$};
\node[exec] (tv1a4) at (4*1.2,1.7) {$t_{(a_4, v_1)}$};
\node[exec] (tv2a4) at (5*1.2,1.7) {$t_{(a_4, v_2)}$};
\node[exec] (tv4a4) at (6*1.2,1.7) {$t_{(a_4, v_4)}$};
\node[exec] at (7*1.2,1.7) {$t_{(a_5, v_3)}$};
\node[exec] at (8*1.2,1.7) {$t_{(a_5, v_4)}$};
\node[exec] at (9*1.2,1.7) {$t_{(a_5, v_5)}$};

\node[exec] at (0*1.2,0.5) {$\mu_{a_1\to v_1}$};
\node[exec] at (1*1.2,0.5) {$\mu_{a_3\to v_1}$};
\node[exec] (a4v1) at (2*1.2,0.5) {$\mu_{a_4\to v_1}$};
\node[exec] at (3*1.2,0.5) {$\mu_{a_2\to v_2}$};
\node[exec] at (4*1.2,0.5) {$\mu_{a_4\to v_2}$};
\node[exec] at (5*1.2,0.5) {$\mu_{a_3\to v_3}$};
\node[exec] at (6*1.2,0.5) {$\mu_{a_5\to v_3}$};
\node[exec] at (7*1.2,0.5) {$\mu_{a_4\to v_4}$};
\node[exec] at (8*1.2,0.5) {$\mu_{a_5\to v_4}$};
\node[exec] at (9*1.2,0.5) {$\mu_{a_5\to v_5}$};

\node[exec] at (0*1.2,-0.7) {$\mu_{v_1\to a_1}$};
\node[exec] at (1*1.2,-0.7) {$\mu_{v_2\to a_2}$};
\node[exec] at (2*1.2,-0.7) {$\mu_{v_1\to a_3}$};
\node[exec] at (3*1.2,-0.7) {$\mu_{v_3\to a_3}$};
\node[exec] (v1a4) at (4*1.2,-0.7) {$\mu_{v_1\to a_4}$};
\node[exec] (v2a4) at (5*1.2,-0.7) {$\mu_{v_2\to a_4}$};
\node[exec] (v4a4) at (6*1.2,-0.7) {$\mu_{v_4\to a_4}$};
\node[exec] at (7*1.2,-0.7) {$\mu_{v_3\to a_5}$};
\node[exec] at (8*1.2,-0.7) {$\mu_{v_4\to a_5}$};
\node[exec] at (9*1.2,-0.7) {$\mu_{v_5\to a_5}$};

\node[highlight, fit=(v2a4)(v4a4)] (agg) {};
\node[highlight, fit=(tv1a4)(tv2a4)(tv4a4)] (tagg) {};
\node[highlight, fit=(a4v1)] (a4v1h) {};

\draw[->, thick, red] (agg) -- (a4v1h);
\draw[->, thick, red] (tagg) -- (a4v1h);

\end{tikzpicture}

\caption{Example of the memory layout and access. Factor configurations for factor $a$ and variable $v$ are denoted as $t_{(a, v)}$.}
\Description{}
\label{fig:memory-access}
\end{figure*}

Another problem is how to represent factor graphs efficiently, as those generated by program analysis applications are typically large and sparse. 
To achieve memory-efficient storage and efficient access, we adopt the Compressed Sparse Row (CSR)~\cite{saad2003iterative} format to represent messages. 
In CSR, a sparse matrix is stored as a flat row-major array.
To compute a factor-to-variable message $\mu_{a\to v}$, we require all incoming variable-to-factor messages to factor $a$. 
Therefore, the variable-to-factor message matrix is organized such that rows are indexed by factor identifiers, grouping messages associated with the same factor contiguously. 
Conversely, to compute a variable-to-factor message $\mu_{v\to a}$, we require all incoming factor-to-variable messages to variable $v$, and thus the factor-to-variable message matrix is indexed by variable identifiers.
For example, the first row (corresponding to $v_1$) contains  $\{\mu_{a_1\to v_1}, \mu_{a_3\to v_1}, \mu_{a_4\to v_1}\}$, and the second row (corresponding to $v_2$) contains $\{\mu_{a_2\to v_2}, \mu_{a_4\to v_2}\}$.
Additionally, factor configurations are stored in a similar manner as variable-to-factor messages. 
Unlike standard BP, which sums over all factor values, BP with local structures computes sub-messages based on an indicator specifying whether a variable is fixed or treated as a wildcard. 
Instead of using a dense tabular representation, we store these configurations in a flat array.
\Cref{fig:memory-access} illustrates the memory layout of the factor graph in \Cref{fig:example-graph} and how $\mu_{a\to v}$ accesses the required messages and factor configurations.


\section{Preliminaries}
\label{sec:preliminaries}

In this section, we introduce the background of our approach, including formal definitions of factor graphs and belief propagation, as well as a brief overview of GPU programming.

\subsection{Factor Graph}
\label{sec:factor-graph}

A factor graph is a special type of PGM. Common PGMs such as Bayesian networks and Markov random fields can be converted into factor graphs, enabling a unified approach to inference.
A factor graph is a bipartite graph representing the factorization of joint probability distribution. Given a set of random variables $\mathbf{X} = \{X_1, X_2, \dots, X_n\}$, suppose their joint probability distribution can be factorized into a product of functions:
\[
P(\mathbf{X}) = \frac{1}{\mathcal{Z}} \prod_{a \in A} f_a(\mathbf{X}_a),
\]
where $\mathbf{X}_a \subseteq \mathbf{X}$ is a subset of $\mathbf{X}$, and $\mathcal{Z}$ is the normalization constant. The corresponding factor graph $G = (V, A, E)$ consists of variable nodes $V = \{v_1, v_2, \dots, v_n\}$, factor nodes $A = \{a_1, a_2, \dots, a_m\}$, and edges $E \subseteq V \times A$. Each node $v_i$ corresponds to the random variable $X_i$ (or $X_{v_i}$ for notational convenience) and each node $a_j$ corresponds to the function $f_{a_j}$. An edge $(v_i, a_j) \in E$ exists if and only if $X_i \in \mathbf{X}_{a_j}$.

\subsection{Belief Propagation}
\label{sec:belief-propagation}

Belief Propagation (BP) is a message-passing algorithm for performing inference on factor graphs. 
BP iteratively exchanges messages between variables and factors along the edges of a factor graph and computes marginal distributions of variables with these messages.
For each edge $(v, a) \in E$, two types of messages are defined, a message $\mu_{v\to a}$ from $v$ to $a$, and a message $\mu_{a \to v}$ from $a$ to $v$. Both types of messages, $\mu_{v\to a}, \mu_{a \to v}: \text{Range}(X_v)\to\mathbb{R}$, are real-valued functions whose domain is the set of values that can be taken by the random variable associated with $v$.

The message sent from a variable $v$ to a neighboring factor $a$ is defined as the product of the incoming factor-to-variable messages:
\begin{equation}
\label{eq:message-vtoa}
    \mu_{v \to a}(x_v) = \prod_{a^* \in N_V(v) \setminus \{a\}} \mu_{a^* \to v}(x_v),
\end{equation}
where $N_V(v)$ denotes the set of neighboring factors of $v$.

The message sent from a factor $a$ to a neighboring variable $v$ is defined as the product of the factor function with the incoming variable-to-factor messages:
\begin{equation}
\label{eq:message-atov}
    \mu_{a \to v}(x_v) = \sum_{\mathbf{x}_a \setminus x_v} f_a(\mathbf{x}_a) \prod_{v^* \in N_A(a) \setminus \{v\}} \mu_{v^* \to a}(x_{v^*}),
\end{equation}
where $N_A(a)$ denotes the set of neighboring variables of $a$, and $\sum_{\mathbf{x}_a \setminus x_v}$ denotes summation over all possible values of $\mathbf{x}_a$ while keeping $x_v$ fixed.

After convergence or a certain number of iterations, the marginal probability of a variable $v$ is computed as
\begin{equation}
\label{eq:marginal}
    P(X_v = x_v) = \frac{\prod_{a \in N_V(v)} \mu_{a \to v}(x_v)}{\sum_{x_v} \prod_{a \in N_V(v)} \mu_{a \to v}(x_v)}.
\end{equation}

If the factor graph is a tree, BP yields exact marginal distributions after a finite number of iterations. 
For graphs with cycles, BP is not guaranteed to converge but often provides good approximate solutions in practice.

\subsection{BP with Local Structures}
\label{sec:bp-local-structure}
The main computational bottleneck of BP lies in the factor-to-variable message $\mu_{a\to v}$ in \Cref{eq:message-atov}, which requires enumeration over an exponential number of factor values.
Local structures in a constraint can be exploited to simplify this computation by grouping assignments that share the same factor value. The computation of $\mu_{a\to v}$ is then reduced to the summation of multiple sub-messages based on the local structure of the factor function. Formally,
\begin{equation}
\label{eq:wu-atov}
\mu_{a \rightarrow v}(x_v) = \sum_{k=1}^{l}
\sum_{\mathbf{x}_a \backslash x_v}
t_k(\mathbf{x}_a) \prod_{v^* \in N_A(a) \setminus \{v\}} \mu_{v^* \to a}(x_{v^*})\text{,}
\end{equation}
where each $t_k$ specifies a weight together with indicators indicating whether a variable is constrained to a fixed value or treated as a wildcard. 
According to \Cref{eq:wu-atov}, each message is computed using \Cref{alg:one_m}, where $v$ is the target variable index, $x_v$ is the target value, $p[l]$ denotes the weights associated with the factor, $b = |N_A(a)|$ is the number of variables adjacent to $a$, and $l$ is the number of sub-messages. The indicator arrays are $indicator[b][l]$, and $\mu[b][r]$ stores messages from $N_A(a)$ to $a$.
As shown in \Cref{alg:one_m}, the complexity is reduced from $O(b \cdot r^b)$ to $O(l \cdot b \cdot r)$, where $b$ can be large.
\begin{algorithm}[t]
\caption{Message Calculation}
\label{alg:one_m}
\begin{algorithmic}[1]
\Require $v$, $x_v$, $p[l]$, $indicator[b][l]$, $\mu[b][r]$
\Ensure $message$

\State $message\gets 0$
\For{$t\gets 1$ to $l$}
\If{$indicator[v][t] \neq x_v \land indicator[v][t] \neq -1$}
    \State \textbf{continue}
\Else

\State $sm\gets p[t]$
\For{$j\gets 1$ to $b$, $j\neq v$}
        \State $c\gets indicator[j][t]$
        \If{$c = -1$}
            \State $sm\gets sm \times \sum_{k=1}^r \mu[j][k]$
        \Else
            \State $sm\gets sm \times \mu[j][c]$
        \EndIf
\EndFor
\State $message\gets message + sm$

\EndIf
\EndFor
\State \Return $message$
\end{algorithmic}
\end{algorithm}

\subsection{GPU Programming}
\label{sec:gpu-programming}

We introduce GPU programming using Compute Unified Device Architecture (CUDA) as a representative example. Modern GPUs consist of multiple streaming multiprocessors (SMs), each of which contains many compute units for massive parallel execution. 
The functions executed on GPU are referred to as kernels. A kernel is launched to run multiple threads in parallel. These threads are grouped into thread blocks, while these blocks together form a grid. Each thread block is scheduled to run on a single SM. Within a thread block, threads are divided into groups of 32 threads known as warps. Threads within a warp execute instructions in a Single-Instruction Multiple-Threads (SIMT) paradigm. 

\section{FastLBP Framework}
\label{sec:framework}

In this section, we first decribe the overall workflow of FastLBP (\Cref{sec:workflow}). 
We then present two key designs of FastLBP: (1) the update strategy representation with the dependency analysis algorithm (\Cref{sec:representation}), and (2) the implementation of BP with local structures on GPUs (\Cref{sec:scheduling}).

\subsection{Overall Workflow}
\label{sec:workflow}

\Cref{fig:workflow} shows the heterogeneous workflow of FastLBP. Given a factor graph derived from probabilistic program analysis and a user-defined update strategy, FastLBP ultimately produces marginal probabilities of variables as results. The workflow consists of the following four major components.

\textbf{Initialization.} 
FastLBP first takes as input a factor graph, typically generated by probabilistic program analysis frameworks such as BINGO or SmartFL. Based on the graph structure and factor functions, FastLBP allocates GPU memory for messages using a CSR-based layout. This representation enables compact storage and efficient access for sparse message structures, avoiding the prohibitive memory cost of dense adjacency matrices on GPUs.

\textbf{Dependency Analysis.} 
FastLBP requires the user to specify an update strategy. FastLBP represents the update strategy as a preordered set $(E, \lesssim)$ over graph edges. This abstraction uniformly captures different update strategies. It then analyzes the dependencies implied by this preordered set and BP computation process. The result is a partitioning of edges into ordered groups, each representing a set of factor-to-variable messages that can be updated simultaneously.


\textbf{Kernel Launch.} 
Based on the dependency analysis results, FastLBP organizes message updates into batches corresponding to the computed edge groups. For each group, a GPU kernel is launched, where each thread is assigned to compute a single message. The required factor, variable, and edge identifiers are passed to each thread to enable efficient data access. This batching strategy reduces kernel launch overhead and ensures that each kernel exposes sufficient parallelism for efficient GPU execution.

\textbf{Message Passing.} 
Finally, FastLBP performs iterative message passing on the GPU. In each iteration, kernels are launched following the dependency order, and messages are updated accordingly using the message-level parallel implementation. The process continues until convergence or a user-specified number of iterations, after which marginal probabilities are computed and returned to the analysis framework.

\begin{figure}[t]
\centering
\begin{tikzpicture}[
    node distance=8mm,
    >=latex,
    box/.style={
        rectangle,
        rounded corners,
        draw=black,
        thick,
        minimum width=3.5cm,
        minimum height=1cm,
        align=center
    },
    cpu/.style={
        box,
        fill=blue!8
    },
    gpu/.style={
        box,
        fill=red!12
    }
]

\node[cpu] (init) {Initialization};

\node[cpu, below=of init] (dep) {Dependency Analysis};

\node[cpu, below=of dep] (grouping) {Kernel Launch};

\node[gpu, below=of grouping] (mp) {Message Passing};

\draw[->] (init) -- (dep);
\draw[->] (dep) -- (grouping);
\draw[->] (grouping) -- (mp);

\draw[->, dashed]
($(init.west)+(-2,0)$) -- (init.west)
node[midway, above] {Factor}
node[midway, below] {Graph};

\draw[->, dashed]
($(dep.west)+(-2,0)$) -- (dep.west)
node[midway, above] {Update}
node[midway, below] {Strategy};

\draw[->, dashed]
(mp.west) -- ++(-2,0)
node[midway, above] {Results};

\draw[->, dashed]
(init.east) .. controls +(1,0) and +(1,0)
.. node[right, align=center]
{GPU\\Memory\\Allocation}
(mp.east);

\node[
    rectangle,
    rounded corners,
    draw=black,
    thick,
    dashed,
    inner sep=6pt,
    fit=(init)(dep)(grouping)
] (cpubox) {};

\node[
    fill=white,
    inner sep=2pt
] at (cpubox.north) {CPU};

\node[
    rectangle,
    rounded corners,
    draw=black,
    thick,
    dashed,
    inner sep=6pt,
    fit=(mp)
] (gpubox) {};

\node[
    fill=white,
    inner sep=2pt
] at (gpubox.south) {GPU};

\end{tikzpicture}
\caption{Workflow of FastLBP.}
\label{fig:workflow}
\Description{}
\end{figure}

\subsection{Representation and Dependency Analysis}
\label{sec:representation}

In this section, we first briefly introduce the representation of update strategies, then formalize its definition and semantics, and finally present the dependency analysis algorithm.

\subsubsection{Update Strategies} 
An update strategy in FastLBP is represented as a preordered set $(E, \lesssim)$ over graph edges.
\Cref{tab:update-strategy} describes several typical update strategies. Users can specify the strategy name, and FastLBP encodes it together with the edge set into a unified intermediate representation, i.e., a preordered set. Additionally, users may define a custom preordered set directly as the update strategy.

\begin{table}[t]
\caption{Update Strategies}
\label{tab:update-strategy}
\centering
\begin{tabular}{@{}ll@{}}
\toprule[1.5pt]
Strategy Name & Description \\
\midrule[1pt]
PARALL & Update all messages in parallel \\
SEQFIX & Sequentially update messages in a fixed order \\
TOPO   & Update messages according to topological order \\
$\ldots$ & $\ldots$ \\
\bottomrule[1.5pt]
\end{tabular}
\end{table}

\subsubsection{Representation Formalization} 

To perform dependency analysis in a uniform manner, we introduce a unified representation of update dependencies. Our key insight is that update strategies can be uniformly characterized as preorders over the edge set of a factor graph. This abstraction allows different scheduling semantics to be handled within a single formal framework. 

We first characterize the dependencies induced by different update strategies. The problem lies in whether a message depends on messages from the current iteration or the previous iteration. 
We focus on factor-to-variable updates, since they are directly used to compute marginal probabilities. We assume that each factor-to-variable message is updated once per iteration. When computing a factor-to-variable message, all incoming variable-to-factor messages must first be computed. These variable-to-factor messages, in turn, depend on the incoming factor-to-variable messages. By treating the computation of variable-to-factor messages as an intermediate step, the essential dependency lies in whether the required factor-to-variable messages come from the current or the previous iteration. 
Formally, by substituting Equation~(\ref{eq:message-vtoa}) into Equation~(\ref{eq:message-atov}) and introducing superscripts to indicate iteration indices, we obtain:
\begin{equation}
\label{eq:message-dependence}
\begin{aligned}
&\mu_{a\to v}^{i+1}(x_v) = \\
&\sum_{\mathbf{x}_a\setminus x_v}
f_a(\mathbf{x}_a)
\prod_{v^* \in N_A(a)\setminus \{v\}}
\prod_{a^* \in N_V(v^*)\setminus \{a\}}
\mu_{a^*\to v^*}^{i+\delta((a, v), (a^*, v^*))}(x_{v^*})\text{,}
\end{aligned}
\end{equation}
where $\delta: E \times E \to \{0, 1\}$ indicates whether a dependency comes from the current iteration ($\delta=1$) or the previous iteration ($\delta=0$). 

This expression shows that the update of a message on edge $(a,v)$ depends only on a specific subset of edges determined by the graph structure. To make this explicit, we define the edge dependency neighborhood. 
\begin{definition}[Edge Dependency Neighborhood]
\label{def:edge-neighborhood}
Define the edge dependency neighborhood function $N_E: E \to \mathcal{P}(E)$ as
\begin{equation*}
    N_E((a,v)) \overset{\text{def}}{=} \bigcup_{v^* \in N_A(a) \setminus \{v\}} \bigcup_{a^* \in N_V(v^*) \setminus \{a\}} \{(a^*, v^*)\}.
\end{equation*}
\end{definition}
The key to defining a legitimate update strategy is specifying a reasonable $\delta$. We now give the definition of intermediate representation used in FastLBP for update strategies, along with the corresponding definition of $\delta$. 

\begin{definition}[Intermediate Representation of Update Strategies]
\label{def:update-strategy}
The intermediate representation of an update strategy is defined as a preordered set $(E, \lesssim)$ on the edge set $E$ of a factor graph, such that for all $e_1\in E$ and for all $e_2 \in N_E(e_1)$, either $e_1 \lesssim e_2$ or $e_2 \lesssim e_1$ holds.
Then, given $(E, \lesssim)$, for all $e_1, e_2 \in E$,
\begin{equation*}
    \delta(e_1, e_2) \overset{\text{def}}{=} 
    \begin{cases}
        1, & \text{if } e_2 \prec e_1 \text{ and } e_2 \in N_E(e_1), \\
        0, & \text{otherwise},
    \end{cases}
\end{equation*}
where $\prec$ is the strict partial order induced by $\lesssim$. The message update is performed according to Equation~(\ref{eq:message-dependence}).
\end{definition}
Intuitively, specifying that $e_1 \lesssim e_2$ in the preorder means that, during one iteration, the message on edge $e_2$ cannot be updated before the message on edge $e_1$. Specifically, $e_1 \prec e_2$ means the message on $e_1$ must be updated before the message on $e_2$ if $e_2$ depends on $e_1$ ($\delta(e_2, e_1) = 1$). 

The following Theorem~\ref{thm:strategy-legality} shows that the function $\delta$ defined in Definition~\ref{def:update-strategy} is legitimate.

\begin{theorem}[Legality of Update Strategy Representation]
\label{thm:strategy-legality}
If the function $\delta$ is given by Definition~\ref{def:update-strategy}, then there does not exist a sequence of edges $e_{j_1}, e_{j_2}, \ldots, e_{j_p} \in E$ (with $p > 1$) such that
\begin{align*}
    e_{j_1} = e_{j_p} \quad \text{and} \quad \forall\, 1 \leq i < p,\ \delta(e_{j_i}, e_{j_{i+1}}) = 1.
\end{align*}
\end{theorem}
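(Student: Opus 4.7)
The plan is to argue directly from the defining properties of a strict partial order. Recall that $\prec$ is, by hypothesis, irreflexive and transitive. The definition of $\delta$ in Definition~\ref{def:update-strategy} has two conjuncts, but the one that matters for avoiding cycles is the ordering constraint: $\delta(e,e')=1$ implies $e' \prec e$.

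First I would unpack the hypothesis. Suppose, for contradiction, that such a sequence $e_{j_1}, e_{j_2}, \ldots, e_{j_p}$ exists with $p>1$, $e_{j_1}=e_{j_p}$, and $\delta(e_{j_i}, e_{j_{i+1}})=1$ for all $1 \leq i < p$. By Definition~\ref{def:update-strategy}, each $\delta(e_{j_i}, e_{j_{i+1}})=1$ forces $e_{j_{i+1}} \prec e_{j_i}$ (the neighborhood condition $e_{j_{i+1}} \in N_E(e_{j_i})$ will play no role in the argument and can be dropped).

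Next I would chain these relations by induction on $i$, using transitivity of $\prec$ to deduce $e_{j_p} \prec e_{j_{p-1}} \prec \cdots \prec e_{j_1}$, and hence $e_{j_p} \prec e_{j_1}$. Substituting the cycle condition $e_{j_1}=e_{j_p}$ yields $e_{j_1} \prec e_{j_1}$, contradicting irreflexivity of $\prec$. This closes the proof.

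I do not anticipate any real obstacle: the statement is essentially a restatement of the fact that a strict partial order admits no finite directed cycle, once one observes that $\delta=1$ is a strictly stronger condition than $\prec$. The only minor care point is handling the base case $p=2$, where transitivity is not invoked and the contradiction is immediate from $e_{j_2} \prec e_{j_1}$ together with $e_{j_1}=e_{j_2}$.
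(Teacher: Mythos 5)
Your proof is correct and follows essentially the same route as the paper's: a proof by contradiction that extracts the ordering relation from each $\delta = 1$, chains it via transitivity of $\prec$, and derives $e_{j_1} \prec e_{j_1}$, contradicting irreflexivity, with the $p=2$ case handled separately without transitivity. Your reading of the direction ($\delta(e,e')=1$ implies $e' \prec e$) matches Definition~\ref{def:update-strategy} exactly, and dropping the neighborhood condition is fine since it plays no role in the argument.
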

\begin{proof}
We proceed by contradiction. Suppose such a sequence exists.

When $p = 2$, we have $e_{j_1} = e_{j_2}$ and $\delta(e_{j_1}, e_{j_2}) = 1$. However, by Definition~\ref{def:update-strategy}, $\delta(e_{j_1}, e_{j_2}) = 1$ implies $e_{j_2} \prec e_{j_1}$. Since $e_{j_1} = e_{j_2}$, this implies $e_{j_1} \prec e_{j_1}$, which contradicts the irreflexivity of the strict partial order $\prec$.

For $p > 2$, by Definition~\ref{def:update-strategy}, $\delta(e_{j_i}, e_{j_{i+1}}) = 1$ for all $1 \leq i < p$ implies $e_{j_i} \prec e_{j_{i+1}}$. Thus, by transitivity of $\prec$, it follows that $e_{j_1} \prec e_{j_p}$. But since $e_{j_1} = e_{j_p}$, this yields $e_{j_1} \prec e_{j_1}$, again contradicting the irreflexivity of $\prec$.

Therefore, no such cyclic sequence can exist, completing the proof.
\end{proof}

Theorem~\ref{thm:strategy-legality} shows that the update strategy defined by Definition~\ref{def:update-strategy} contains no cyclic dependencies. That is, within a single iteration, starting from any message, although it may depend on other messages computed in the current iteration, and those messages in turn may depend on yet others from the same iteration, this chain of dependencies must eventually terminate at messages whose values are taken from the previous iteration. This termination is guaranteed by the finiteness of the edge set $E$. Therefore, the current iteration's results can always be computed starting from the messages of the previous iteration, which confirms the well-definedness of Definition~\ref{def:update-strategy}. 

Definition~\ref{def:update-strategy} captures a sufficiently broad subset of valid update strategies. For example, PARALL can be represented by a preorder that all edges are in the same equivalence class, SEQFIX corresponds to a total order over all edges, and TOPO is expressed as a topological order.

\subsubsection{Dependency Analysis Algorithm.}
We now present the dependency analysis algorithm based on the intermediate representation of the update strategy.

\begin{algorithm}[t]
    \caption{Dependency Analysis}
    \label{alg:dependence-analysis}
    \begin{algorithmic}[1]
        \Require A preordered set $(E, \lesssim)$ represented as a Hasse diagram, where $E = \{e_1,\ldots,e_N\}$
        \Ensure A sequence of edge sets $S = [s_1,\ldots,s_k]$
        \State $e_1,\ldots,e_N \gets \operatorname{TopologicalSort}(E, \lesssim)$ \label{line:dependence-toposort}
        \State $S \gets \emptyset$, $NextSet \gets \emptyset$
        \For{$i \gets 1$ to $N$} \label{line:dependence-outerloop}
            \For{$e$ in $NextSet$}
                \If{$e \in N_E(e_i)$ and $e \prec e_i$} \label{line:dependence-if}
                    \State $S.\text{append}(NextSet)$
                    \State $NextSet \gets \emptyset$
                    \State \textbf{break}
                \EndIf \label{line:dependence-endif}
            \EndFor
            \State $NextSet \gets NextSet \cup \{e_i\}$
        \EndFor
        \State $S.\text{append}(NextSet)$
        \State \Return S
    \end{algorithmic}
\end{algorithm}

In \Cref{alg:dependence-analysis}, $\operatorname{TopologicalSort}$ on line~\ref{line:dependence-toposort} refers to a topological sort performed on the Hasse diagram of the partial order on the set of equivalence class. Edges in the same equivalence class can be in any order. After the topological sort, the algorithm greedily groups the edges into batches. When processing the edge $e_i$, if it depends on any edge already in $NextSet$, then the current $NextSet$ is output as a batch, and a new batch is started. Finally, the algorithm outputs a sequence of edge sets $S$. Updating factor-to-variable messages according to this sequence yields the expected result under the specified update strategy. 
The overall time complexity of the dependency analysis is $O(|E|^2)$ in the worst case, as the algorithm performs a quadratic scan in the number of edges. Since this analysis is executed only once prior to message iterations, its cost is negligible compared to the overall inference procedure.

Theorem~\ref{thm:correctness} establishes the correctness of the algorithm, guaranteeing that all message dependencies for a given batch are correctly sourced, either from previously computed batches in the current iteration or from the previous iteration, as required.
\begin{theorem}[Correctness]
\label{thm:correctness}
The dependency analysis algorithm is correct in the following sense: for all $s_j \in \{s_1, \ldots, s_k\}$, for all $e \in s_j$, and for all $e' \in N_E(e)$,
\begin{enumerate}
    \item if $e' \prec e$, then $e' \in \bigcup_{1 \le l < j} s_l$.
    \item if $e \lesssim e'$, then $e' \in \bigcup_{j \le l \le k} s_l$.
\end{enumerate}
\end{theorem}
\begin{proof}
We prove the theorem in two parts:
\begin{itemize}
    \item Since the batches are constructed according to a topological order (line~\ref{line:dependence-toposort}), if $e' \prec e$, then $e' \in \bigcup_{1 \le l \le j} s_l$. Moreover, because $e' \in N_E(e)$ and $e' \prec e$, lines~\ref{line:dependence-if}--\ref{line:dependence-endif} ensure that $e' \notin s_j$. Combining both facts, we conclude that $e' \in \bigcup_{1 \le l < j} s_l$.
    \item If $e \lesssim e'$, by the topological sort (line~\ref{line:dependence-toposort}), $e'$ cannot appear in any batch preceding $s_j$. Therefore, $e' \in \bigcup_{j \le l \le k} s_l$.
\end{itemize}
\end{proof}

\subsection{Implementation of BP with Local Structures}
\label{sec:scheduling}

To leverage the computational benefits of BP with local structures on GPUs, we develop a message-level parallel implementation that maps batches of message updates to GPU execution.

\subsubsection{Message-Level Parallelization}
We adopt a message-level parallelization strategy, where each GPU thread is responsible for computing a single message on an edge. For each edge $(a, v) \in E$, a thread computes either a variable-to-factor message $\mu_{v\to a}$ or a factor-to-variable message $\mu_{a\to v}$.
To compute a batch of factor-to-variable messages obtained from dependency analysis, we organize the computation into two phases.
This separation avoids read-after-write dependencies and ensures that all messages can be computed fully in parallel based on the precomputed messages.
In the first phase, we compute all required variable-to-factor messages $\mu_{v\to a}$ in a kernel launch. According to \Cref{eq:message-vtoa}, each $\mu_{v\to a}$ is computed as the product of incoming factor-to-variable messages and can be computed independently. We assign one thread per edge to compute these messages in parallel.
In the second phase, we compute the factor-to-variable messages $\mu_{a\to v}$ using BP with local structures. As shown in \Cref{alg:one_m}, each message is evaluated as the sum of multiple sub-messages. Each thread iterates over the sub-messages, checks the indicator conditions, and accumulates the results accordingly.
This design exposes massive parallelism across edges while maintaining good data locality within each thread.

\subsubsection{Memory Organization}
Efficient memory organization is critical for achieving high performance on GPUs, especially for large-scale factor graphs generated by program analysis. In FastLBP, we adopt a memory layout that ensures both compact storage and efficient memory access.
We store all messages in global GPU memory using a CSR format. Specifically, we maintain two separate CSR arrays for variable-to-factor messages and factor-to-variable messages, respectively. 
Each CSR representation consists of two key auxiliary arrays for indexing. One is the row offset array that represents the starting position of each row. The other is the column index array that contains the column indices of the corresponding value.
To facilitate efficient access patterns, we align the CSR rows with the computation direction. When computing a factor-to-variable message $\mu_{a \to v}$, all required variable-to-factor messages associated with factor $a$ are stored contiguously by using factor identifiers as row indices. 
Conversely, when computing $\mu_{v \to a}$, we organize messages using variable identifiers as row indices. 
This layout allows each thread to access all necessary inputs via sequential memory reads, reducing indexing overhead. 

For BP with local structures, instead of storing factor functions as dense tables, we store them as a flattened array of sub-message indicators to effective support BP with local structures.
We represent the indicator array as $indicator[N][L]$, where $N = |E|$ is the number of edges and $L = \max_{a \in A} l_a$ is the maximum number of terms among all factors.
For efficient GPU storage and access, we organize this array in a CSR-like format using factor identifiers as row indices, so that all indicators associated with the same factor are stored contiguously.
For each edge $e = (a, v)$, $indicator[e]$ is an integer array of length $L$. For $i \leq l_a$, $indicator[e][i]$ follows the original semantics in \Cref{sec:bp-local-structure}. That is, $indicator[e][i] = c$ if $v$ is constrained to value $c$ in the $i$-th term, and $indicator[e][i] = -1$ otherwise. For $i>l_a$, we pad $indicator[e][i]$ with a sentinel value (e.g., -2) indicating an invalid term.
This representation avoids storing factor tables in exponential size and enables each thread to perform BP with local structures, significantly reducing memory footprint and improving computational efficiency.

\subsubsection{Kernel Launch}
To execute message updates efficiently on GPUs, we organize computation into batches derived from dependency analysis. Each batch consists of messages that can be computed independently without synchronization.
For each batch, we launch a GPU kernel where each thread is assigned a single message computation. Each thread is provided with the corresponding factor, variable, and edge identifier. The edge identifier is computed using the CSR representation and determines the location for storing the output message. These identifiers allow each thread to efficiently locate input messages and write results without contention.
Batching not only reduces kernel launch overhead but also improves efficiency by exposing massive parallelism.

\section{Experimental Evaluation}
\label{sec:evaluation}

In this section, we aim to answer the following research questions:
\begin{enumerate}[label=\textbf{RQ\arabic*.}]
\item How efficient is FastLBP compared with state-of-the-art approaches?
\item How important is supporting flexible update strategies for program analysis?
\end{enumerate}

We describe our experimental setup in \Cref{sec:experimental-setup}, and answer the above questions in \Cref{sec:efficiency} and \Cref{sec:update-impact}, respectively.

\subsection{Experimental Setup}
\label{sec:experimental-setup}

We implement FastLBP in C++ and CUDA, reusing the frontend of libDAI~\cite{mooij2010libdai} for reading factor graphs. 
We conduct all experiments on a Linux machine with 2.40 GHz Intel Xeon processors, 256 GB RAM, and NVIDIA GeForce RTX 4090 GPUs. 

\subsubsection{Benchmarks} 
We use the same benchmarks as in~\cite{wu2025belief}. 
Specifically, we evaluate our approach on two probabilistic program analysis
applications: SmartFL~\cite{zeng2022fault} and BINGO~\cite{raghothaman2018user}. 
Both systems construct large factor graphs derived from program
analysis and perform probabilistic inference using BP.

For SmartFL, we use the Defects4J benchmark~\cite{just2014defects4j}. 
After filtering invalid cases where the factor graphs cannot be constructed by SmartFL, we
obtain 194 factor graphs. On average, each benchmark contains about 102K lines
of code, and the generated probabilistic model contains approximately 258K
variables and 812K edges.

For BINGO, we use the benchmark consisting of eight Java programs collected from previous work~\cite{eslamimehr2014race, zhang2017effective} and the DaCapo benchmark suite~\cite{blackburn2006the}.
BINGO performs a flow- and context-sensitive datarace analysis~\cite{naik2006effective} on the benchmark.
The average program size is about 81K lines of code. The corresponding factor graphs generated by BINGO contain on average 62K variables and 159K edges.

\subsubsection{Baselines}
We compare FastLBP against two baselines:~\citet{wu2025belief} and PGMax~\cite{zhou2024pgmax}. \citet{wu2025belief} implemented their approach on SmartFL and libDAI, evaluating it on SmartFL and BINGO, respectively. Both implementations are single-threaded. We adopt these implementations as the state-of-the-art CPU-based baselines. 
PGMax~\cite{zhou2024pgmax} is a Python/JAX~\cite{jax2018github} package that implements BP for discrete factor graphs with arbitrary shapes. It only supports synchronous BP with a fixed number of iterations. PGMax can run on GPUs since JAX supports GPU acceleration. We adopt PGMax as the state-of-the-art GPU-based baseline in our evaluation.

\subsection{Efficiency}
\label{sec:efficiency}

\subsubsection{Efficiency on SmartFL} 
To evaluate the efficiency of FastLBP, we compare it with \citet{wu2025belief} and PGMax~\cite{zhou2024pgmax} on SmartFL. Since the original setting of SmartFL adopts synchronous BP, all approaches use synchronous BP in this experiment to ensure a fair comparison. The number of message-passing iterations is set to 500, at which point BP has converged on most factor graphs (190/194). 

Overall, FastLBP significantly outperforms both baselines. \Cref{fig:efficiency-smartfl} shows the inference time against the number of variables in the factor graphs. 
Compared to \citet{wu2025belief}, FastLBP achieves a $17.42\times$ speedup on average (geometric mean). The advantage becomes more pronounced as the size of the factor graph increases. For small graphs (around $10^2$ to $10^3$ variables), the performance differences between them are relatively small due to the limited amount of parallelism available. However, as the number of variables grows beyond $10^4$, FastLBP exhibits significantly lower inference time than \citet{wu2025belief}.
FastLBP also outperforms PGMax, achieving an average speedup of $6.14\times$. This improvement mainly comes from FastLBP's efficient implementation of BP with local structures. We note that \texttt{Chart-15} is not shown in \Cref{fig:efficiency-smartfl} because PGMax runs out of memory on this factor graph. This happens because PGMax adopts a tabular representation of factors, while \texttt{Chart-15} contains factors connected to many variables.
\begin{figure}[t]
    \centering
    \includegraphics[width=\linewidth]{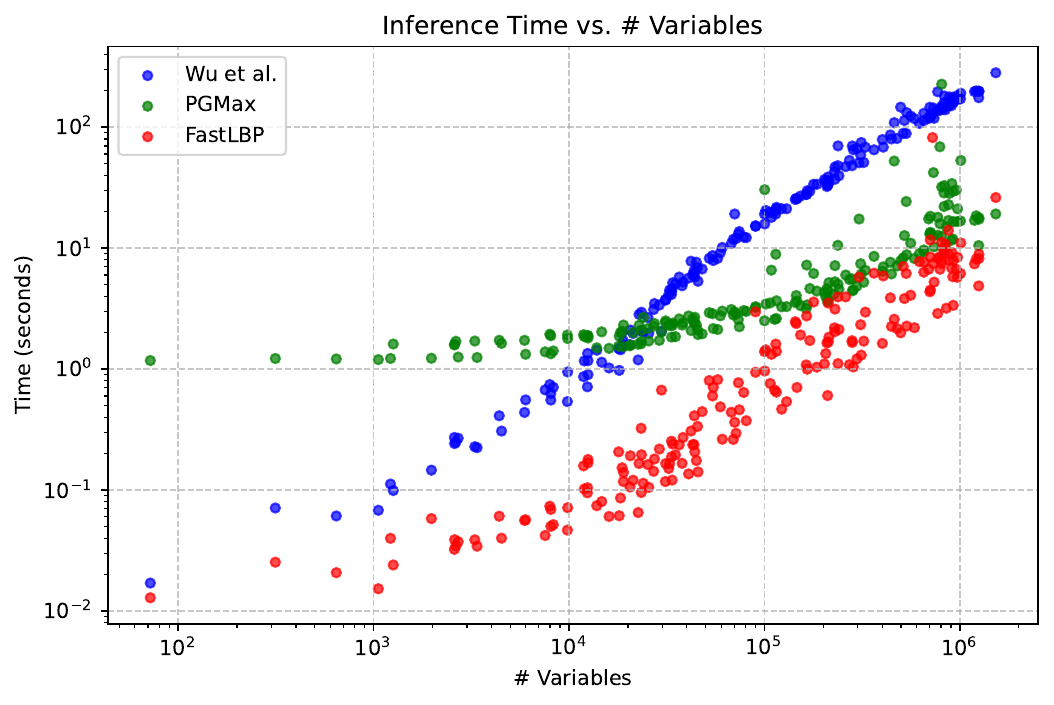}
    \caption{Comparison of Inference Time on SmartFL. Each point corresponds to a test program.}
    \Description{FastLBP outperforms baselines}
    \label{fig:efficiency-smartfl}
\end{figure}

We also check the correctness of FastLBP on SmartFL. As shown in \Cref{thm:correctness}, FastLBP is mathematically equivalent to sequential implementations, but minor floating-point errors might still occur due to implentation details. We quantify the differences in marginal probabilities using relative error. Across all test cases, FastLBP has relative errors less than $10^{-8}$ compared to \citet{wu2025belief}, confirming its correctness.

\subsubsection{Efficiency on BINGO} 
We compare FastLBP with \citet{wu2025belief} on BINGO. 
BINGO performs probabilistic inference with user feedback as evidence. In each iteration, it queries the user about the alarm with the highest likelihood, then updates the posterior distribution based on the feedback. Efficient probabilistic inference is therefore crucial for quickly identifying true alarms. We evaluate the overall efficiency of inpecting true alarms.
We adopt the asynchronous BP introduced in \Cref{sec:update-strategy}, which is also the update strategy originally used by BINGO. Empirically, BP does not converge in most factor graphs generated by BINGO, making asynchronous update strategy more suitable for this workload. We do not compare PGMax in this experiment because it only supports synchronous BP. We follow the original stopping criterion, limiting the number of message-passing iterations to between 500 and 1,000 when BP does not converge.

\Cref{fig:efficiency-bingo} shows the number of true alarms inspected over the accumulated inference time for each benchmark.
On most benchmarks, FastLBP identifies all true alarms in significantly less time. The only exception is \texttt{weblech}, where the factor graph is relatively small (with 696 variables). Under the constraints of the asynchronous update strategy, FastLBP cannot achieve high parallelism on such small graphs. The benefits of GPU parallelism do not outweigh the overhead introduced by dependency analysis and memory transfers. As the graph size increases, the advantage of FastLBP becomes increasingly evident. On \texttt{hedc}, FastLBP is slightly faster than \citet{wu2025belief}. On the remaining benchmarks (\texttt{luindex}, \texttt{jspider}, \texttt{avrora}, \texttt{xalan}, \texttt{sunflow}, and \texttt{ftp}), FastLBP runs several times faster. Although FastLBP is marginally slower on small graphs, this has negligible practical impact because the total BP time on such graphs is inherently small (less than 3 seconds on \texttt{weblech}). Thus, the performance difference does not significantly affect user experience. In contrast, users are primarily concerned with execution time on large-scale graphs, where the total inference time can extend to several hours or even exceed one day. It is precisely in these large-scale scenarios that FastLBP excels, significantly accelerating BP and substantially reducing user waiting time.

\begin{figure}[t]
    \centering
    \includegraphics[width=\linewidth]{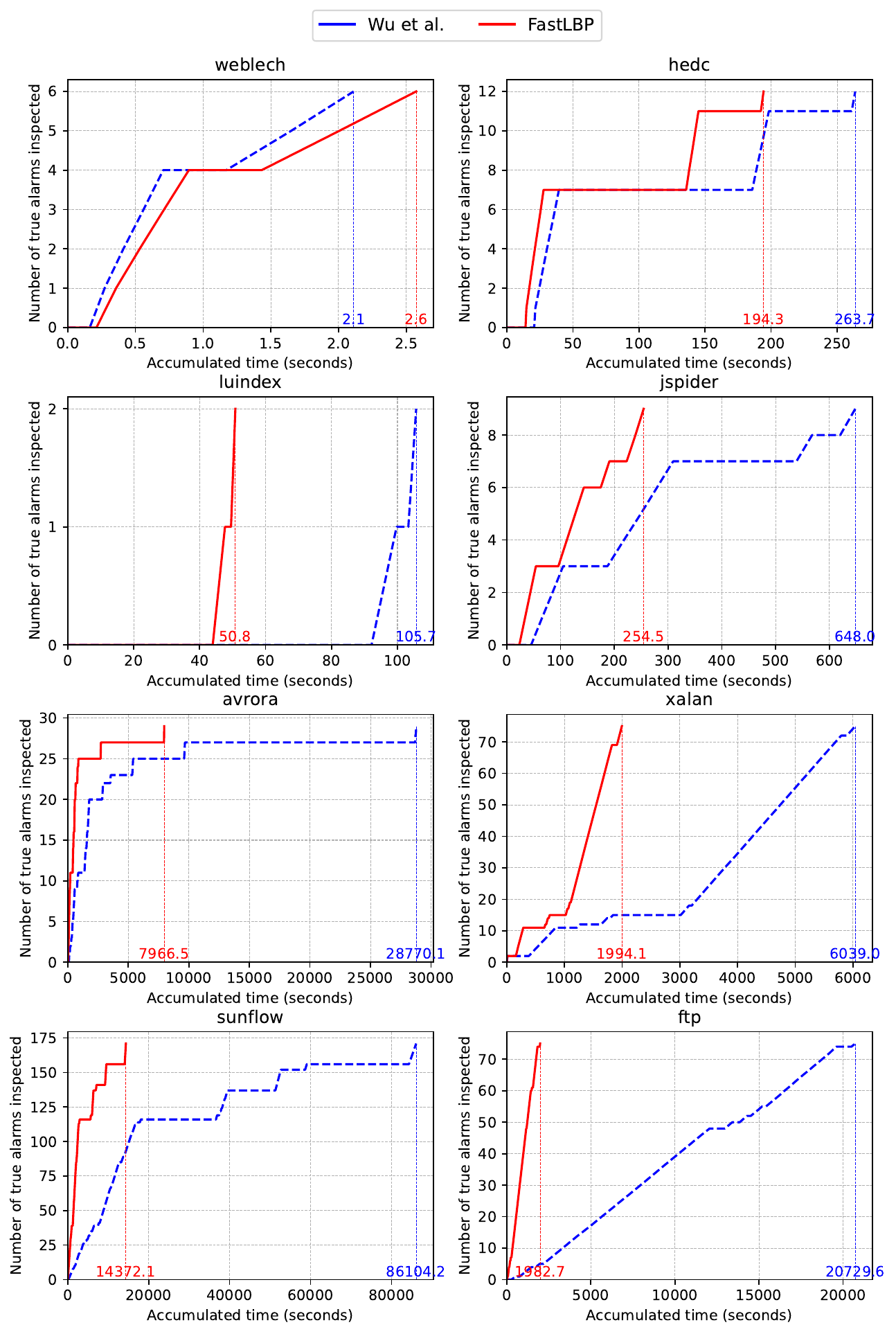}
    \caption{Comparison of Inference Time on BINGO.}
    \Description{FastLBP outperforms baselines}
    \label{fig:efficiency-bingo}
\end{figure}

FastLBP achieves an average speedup of $2.82\times$ in total inference time compared to \citet{wu2025belief}. The relatively moderate overall speedup is mainly due to two reasons. First, the asynchronous update strategy required by BINGO limits the amount of parallelism that can be exploited by GPU execution. Second, several benchmarks contain relatively small factor graphs. When excluding benchmarks where \citet{wu2025belief} finishes within 15 minutes (i.e., \texttt{weblech}, \texttt{hedc}, \texttt{luindex}, and \texttt{jspider}), the average speedup increases to $5.12\times$, with a maximum speedup of $10.46\times$.
These results demonstrate that FastLBP effectively supports asynchronous update strategies required by program analysis applications.

\subsection{The Importance of Supporting Flexible Update Strategies}
\label{sec:update-impact}
To investigate the importance of supporting flexible update strategies in program analysis, we evaluate how different BP update strategies affect the quality of inference results on BINGO. In particular, we compare asynchronous BP used by FastLBP and \citet{wu2025belief} with synchronous BP implemented in PGMax.
We do not evaluate on SmartFL because synchronous BP already converges reliably on SmartFL. We use the default stopping criterion of BINGO for FastLBP and \citet{wu2025belief}, which stops at 500 to 1,000 iterations based on the convergence rate. 
Since PGMax runs BP for a fixed number of iterations, we set the number of iterations to 1,000 to match the upper bound of BINGO's stopping criterion.

Following BINGO, we employ three metrics to quantify the quality of inference results: inversion count, Rank-100\%-T, and Rank-90\%-T. An inversion is defined as a pair of alarms where a false alarm is ranked higher than a true alarm in the user's inspection sequence. Formally, let $l_1, \ldots, l_n \in \{0, 1\}$ denote the ground-truth labels of the alarms provided by the user. Then, the inversion count is calculated as: $\text{Inversion}(l_1, \dots, l_n)=\sum_{i=1}^n (1 - l_i) \sum_{j=i+1}^n l_j$. Rank-100\%-T and Rank-90\%-T represent the numbers of interactions to identify all and 90\% of the true alarms. 
Lower values indicate that users can prioritize and inspect true alarms more rapidly. Consequently, higher accuracy in BP results correlates with improved alarm ranking and reduced inspection effort.

\begin{table*}[t]
\caption{Results of FastLBP and PGMax compared to \citet{wu2025belief}. PGMax runs out of memory (OOM) on sunflow. Average Change represents the geometric mean of the ratios, minus one.}
\label{tab:update-impact}
\centering
\begin{tabular}{@{}cccccccccc@{}}
\toprule[1.5pt]
\multirow{2}{*}{Program} & \multicolumn{3}{c}{Inversion} & \multicolumn{3}{c}{Rank-100\%-T} & \multicolumn{3}{c}{Rank-90\%-T} \\
\cline{2-10}
& \citet{wu2025belief} & FastLBP & PGMax & \citet{wu2025belief} & FastLBP & PGMax & \citet{wu2025belief} & FastLBP & PGMax \\
\midrule[1pt]
weblech & 18 & 18 & 19 & 11 & 11 & 11 & 11 & 11 & 11 \\
hedc & 347 & 347 & 1109 & 87 & 87 & 152 & 65 & 65 & 151 \\
luindex & 29 & 29 & 120 & 17 & 17 & 62 & 17 & 17 & 62 \\
jspider & 72 & 64 & 398 & 26 & 20 & 196 & 26 & 20 & 196 \\
avrora & 3,187 & 3,048 & 5,625 & 900 & 885 & 926 & 293 & 302 & 381 \\
xalan & 3,110 & 3,890 & 6,306 & 126 & 141 & 868 & 117 & 128 & 122 \\
sunflow & 25,200 & 25,497 & OOM & 829 & 825 & OOM & 595 & 566 & OOM \\
ftp & 570 & 606 & 14,999 & 92 & 95 & 430 & 81 & 81 & 304 \\
\midrule[1pt]
Average Change & - & $1.69\%\uparrow$ & $256.14\%\uparrow$ & - & $1.73\%\downarrow$ & $186.68\%\uparrow$ & - & $2.37\%\downarrow$ & $128.48\%\uparrow$ \\
\bottomrule[1.5pt]
\end{tabular}
\end{table*}

\Cref{tab:update-impact} shows the results of the three approaches.
Compared with \citet{wu2025belief}, FastLBP shows only negligible differences in all three metrics. Theoretically, FastLBP should produce identical results to the sequential implementation under the same update strategy. The small discrepancies observed in practice are primarily due to floating-point errors between the two implementations. This leads to differences in the marginal probabilities, thus affecting the final alarm rankings. These results confirm that FastLBP preserves the inference quality of the sequential implementation while enabling GPU acceleration.

In contrast, PGMax exhibits significantly worse results across all metrics. The inversion count, Rank-100\%-T, and Rank-90\%-T increase by $256.14\%$, $186.68\%$, and $128.48\%$, respectively. This degradation significantly affects user experience because users must inspect many more false alarms before discovering true bugs. The main reason is that PGMax only supports synchronous BP. In program analysis frameworks such as BINGO, BP often fails to converge and tends to exhibit more oscillating behaviors under synchronous BP than asynchronous BP. 

These results highlight the importance of supporting flexible update strategies for program analysis. In particular, asynchronous BP is essential for workloads where synchronous BP leads to unstable inference results. By supporting both synchronous and asynchronous update strategies, FastLBP can adapt to different workloads while maintaining both efficiency and accuracy.

\section{Related Work}
\label{sec:related-work}

Our work is related to research on (1) GPU acceleration for BP, (2) parallel and distributed BP on CPUs, and (3) probabilistic program analysis using BP. We discuss the related work below.

\textbf{GPU acceleration for BP.}
GPU acceleration of BP has been explored in several domain-specific applications, such as computer vision~\cite{grauergray2008gpu, brunton2006belief,liang2009hardware} and channel coding~\cite{reddy2012a,abburi2011a,romero2012sequential}. These approaches design GPU implementations tailored to the characteristics of PGMs in specific domains. 
In contrast, our work focuses on accelerating BP in the context of program analysis, where the resulting PGMs are significantly larger in scale and more complex in graph structure.
Beyond domain-specific applications, \citet{zhou2024pgmax} implement BP using JAX~\cite{jax2018github}, enabling GPU acceleration for general PGMs. Compared to their work, our approach supports more sophisticated update strategies and a more efficient BP algorithm.
Randomized Belief Propagation~\cite{merwe2019message} randomly selects a subset of messages to update in each iteration, effectively balancing convergence behavior and parallel efficiency.
In contrast, our approach supports a broader range of update strategies and can be easily adapted to incorporate their update strategy.

\textbf{Parallel and distributed BP on CPUs.} 
Synchronous BP is straightforward to parallelize. To improve convergence, prior work on CPU-based parallel BP has focused on frontier-based update strategies~\cite{gonzalez2009residual,aksenov2020relaxed}. They design update strategies that use the concept of residuals~\cite{elidan2006residual} and update the most influential messages during each iteration for both parallelism and convergence behavior.
While effective on CPUs, these approaches incur significant overhead on GPUs~\cite{merwe2019message}, limiting their applicability in GPU settings. 
In contrast, we propose a framework that enables flexible update strategies through a unified representation and dependency analysis, which can also be naturally extended to CPU execution.

\textbf{Probabilistic program analysis using BP.} 
Recent years have seen increasing interest in applying probabilistic reasoning to program analysis. In particular, Bayesian program analysis transforms static analysis derivation graphs into PGMs and reasons about the likelihood that a reported alarm corresponds to a real bug. 
Prior work in this area can be classified in three directions. 
The first line of work focuses on incorporating diverse forms of posterior information to refine the alarm ranking. Posterior information includes user feedback on the alarms~\cite{raghothaman2018user}, differences between the derivations of alarms~\cite{heo2019continuously}, test execution results~\cite{chen2021boosting}, and informal information in the program~\cite{li2025combining}. 
The second line explores how to refine Bayesian networks to mitigate false generalization on posterior information. \citet{kim2022learning} proposed to learn new rules and probabilities. \citet{zhang2024learning} and \citet{shi2025on} aimed at learning to select a suitable abstraction.
The last line applies probabilistic program analysis to augment other software engineering tasks, such as fault localization~\cite{zeng2022fault, wu2025smartfl} and fuzzing~\cite{zhang2026fuzzing}. 
These approaches typically model program semantics as PGMs and perform BP for marginal inference. However, the resulting models are often large and complex, making BP a major performance bottleneck. Our work can accelerate these systems by providing an efficient and accurate inference library.

\section{Conclusion}
\label{sec:conclusion}

We present FastLBP, a GPU-accelerated Belief Propagation framework for program analysis.
By introducing a unified representation for update strategies and a dependency analysis algorithm, FastLBP enables parallel execution for flexible update strategies.
Additionally, our memory-efficient GPU implementation of BP with local structures effectively handles logical constraints in program analysis.
Our evaluation on SmartFL and BINGO demonstrates that FastLBP significantly outperforms both CPU-based and GPU-based approaches in efficiency without sacrificing accuracy.


\section*{Data Availability Statement}
The artifact for this paper is available at Zenodo~\cite{artifact}. It includes source codes, scripts, and data required to reproduce the results presented in \Cref{sec:evaluation}.

\bibliographystyle{ACM-Reference-Format}
\bibliography{sample-base}


\end{document}